\newtheorem{Definition}{Definition}
\begin{document}

\title{LGM: Mining Frequent Subgraphs from Linear Graphs}
\author{Yasuo Tabei\inst{1} \and Daisuke Okanohara\inst{2} \and Shuichi Hirose\inst{3} \and Koji Tsuda\inst{1,3}}
\authorrunning{Yasuo Tabei et al.}
\tocauthor{Yasuo Tabei, Daisuke Okanohara, Shuichi Hirose, Koji Tsuda}
\institute{ERATO Minato Project, Japan Science and Technology Agency, Sapporo, Japan \\
\and
Preferred Infrastructure, Inc, Tokyo, Japan \\
\and
Computational Biology Research Center, National Institute of Advanced Industrial Science and Technology (AIST), Tokyo, Japan\\
\email{yasuo.tabei@gmail.com}, \email{hillbig@preferred.jp}, \email{shuichi.hirose@nagase.co.jp}, \email{koji.tsuda@aist.go.jp}
}
\maketitle

%\pagenumbering{arabic}
%\setcounter{page}{1}%Leave this line commented out.

\begin{abstract} %\small\baselineskip=9pt
A linear graph is a graph whose vertices are totally ordered. 
Biological and linguistic sequences with interactions among symbols are
naturally represented as linear graphs. Examples include
protein contact maps, RNA secondary structures and predicate-argument
structures. 
Our algorithm, linear graph miner (LGM), leverages the vertex order for
efficient enumeration of frequent subgraphs. 
Based on the reverse search principle, the pattern space 
is systematically traversed without expensive duplication checking.
Disconnected subgraph patterns are particularly important in linear
graphs due to their sequential nature. 
Unlike conventional graph mining algorithms detecting connected patterns
only, LGM can detect disconnected patterns as well.
The utility and efficiency of LGM are demonstrated in experiments 
on protein contact maps.
% and natural language sentences.
\end{abstract}

\section{Introduction}
Frequent subgraph mining is an active research area with successful
applications in, e.g., 
chemoinformatics~\cite{Saigo08}, software science~\cite{Eic08}, and computer vision~\cite{Now07a}.
The task is to enumerate the complete set of frequently appearing subgraphs in
a graph database. 
Early algorithms include AGM~\cite{Inokuchi00}, FSG~\cite{Kuramochi01}
and gSpan~\cite{Yan02}.
Since then, researchers paid considerable efforts to improve the
efficiency, for example, by mining closed patterns only~\cite{Yan03}, or 
by early pruning that sacrifices the completeness (e.g., leap search~\cite{yan08}).
However, graph mining algorithms are still too slow for large graph databases (see e.g.,\cite{wale06}).
The scalability of graph mining algorithms is much worse than those for
more restricted classes such as trees~\cite{abe02} and sequences~\cite{pei01}.
It is due to the fact that, for trees and sequences, it is
possible to design a pattern extension rule that does not create
duplicate patterns (e.g., rightmost extension)~\cite{abe02}.
For general graphs, there are multiple ways to generate the same
subgraph pattern, and it is necessary to detect duplicate patterns and
prune the search tree whenever duplication is detected.
In gSpan~\cite{Yan02}, a graph pattern is represented as a DFS code, 
and the duplication check is implemented via minimality checking of the code. 
It is a very clever mechanism, because one does not need to track back
the patterns generated so far.
Nevertheless, the complexity of duplication checking is 
exponential to the pattern size~\cite{Yan02}.
It harms efficiency substantially, especially when mining large patterns.

A linear graph is a graph whose vertices 
are totally ordered~\cite{Dav06,Fer07} (Figure~\ref{fig:lg}). 
For example, protein contact maps, RNA secondary structures, alternative splicing patterns 
in molecular biology and predicate-argument structures~\cite{Miyao08} in natural
languages can be represented as linear graphs.
Amino acid residues of a protein have natural ordering 
from N- to C-terminus, and English words in a sentence are ordered as well.
Davydov and Batzoglou~\cite{Dav06} addressed the problem of aligning several
linear graphs for RNA sequences, assessed the computational
complexity, and proposed an approximate algorithm. 
Fertin et al. assessed the complexity of finding a maximum common
pattern in a set of linear graphs~\cite{Fer07}.
In this paper, we develop a novel algorithm, linear graph miner (LGM), 
for enumerating frequently appearing subgraphs in 
a large number of linear graphs. 
The advantage of employing linear graphs is that we can derive a
pattern extension rule that does not cause duplication, 
which makes LGM much more efficient than conventional graph mining algorithms.

\begin{figure}[t]
\begin{center}
\begin{tabular}{cc}
\includegraphics[width=0.4\textwidth]{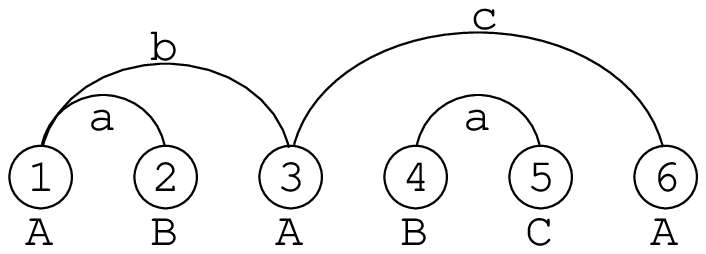}
\end{tabular}
\vspace{-0.5cm}
\end{center}
\caption{An example of linear graph}
\label{fig:lg}
\end{figure}

We design the extension rule based on the reverse search 
principle~\cite{Avis96}. Perhaps confusingly, 'reverse search' does not 
refer to a particular search method, but a guideline for designing
enumeration algorithms. 
A pattern extension rule specifies how to generate children from a
parent in the search space. 
In reverse search, one specifies a rule that generates a parent uniquely
from a child (i.e., {\em reduction map}).
The pattern extension rule is obtained by 'reversing' the reduction map:
When generating children from a parent, all possible candidates are
prepared and those mapping back to the parent by the reduction map are selected.
An advantage of reverse search is that, given a reduction map, 
the completeness of the resulting pattern extension rule 
can easily be proved~\cite{Avis96}. 
In data mining, LCM, one of the fastest closed itemset miner, 
was designed using reverse search~\cite{Uno05}. 
It is applied in the design of a dense module enumeration 
algorithm~\cite{Geo09}
and a geometric graph mining algorithm recently~\cite{Now08}.
In computational geometry and related fields, 
there are many successful applications.\footnote{See a list of
  applications at \url{http://cgm.cs.mcgill.ca/~avis/doc/rs/applications/index.html}}
LGM's reduction map is very simple: remove the largest edge 
in terms of edge ordering.
Fortunately, it is not necessary to take the 
``candidate preparation and selection'' approach in LGM. 
We can directly reverse the reduction map to an explicit extension rule here.

Linear graphs can be perceived as the fusion of graphs and sequences. 
Sequence mining algorithms such as Prefixspan~\cite{pei01} can usually detect 
gaped sequence patterns. In applications like motif discovery in
protein contact maps~\cite{Glyakina07}, it is essential to allow ``gaps'' 
in linear graph patterns. More precisely, {\em disconnected} graph patterns 
should be allowed for such applications. 
Since conventional graph mining algorithms can detect only
connected graph patterns, their application to contact maps is
difficult. 
In this paper, we aim to detect connected and disconnected
patterns with a unified framework.

In experiments, we used a protein 3D-structure dataset 
from molecular biology.
We compared LGM with gSpan in efficiency, and found that 
LGM is more efficient than gSpan.
It is surprising to us, because LGM detects a much larger number of patterns
including disconnected ones.  
To compare the two methods on the same basis, we added supplementary
edges to help gSpan to detect a part of disconnected patterns.
Then, the efficiency difference became even more significant.

\section{Preliminaries}
Let us first define linear graphs and associated concepts.
\begin{Definition} [Linear graph]
{\rm
Denote by $\Sigma^V$ and $\Sigma^E$ the set of vertex and edge
labels, respectively.
A labeled and undirected {\em linear graph} 
$g=(V, E, L^{V},L^{E})$ 
consists of an ordered vertex set $V \subset \mathbb{N}$, 
an edge set $E \subseteq V \times V$, a vertex labeling 
$L^{V}: V \rightarrow \Sigma^{V}$ and an edge labeling 
$L^{E}: E \rightarrow \Sigma^{E}$. 
Let the size of the linear graph $|g|$ be the number of its edges.
Let ${\mathcal G}$ denote the set of all possible linear graphs 
and let $\theta \in {\mathcal G}$ denote the empty graph.
}
\end{Definition}

\noindent
The difference from ordinary graphs is that the vertices are defined 
as a subset of natural numbers, introducing the total order.
Notice that we do not impose connectedness here.
%In this paper, the graph size $|g|$ refers to 
%the number of edges $E_g$.
%Let ${\mathcal G}$ denote the set of all possible linear graphs and 
%let $\phi \in {\mathcal G}$ denote the empty graph. 
%We write an edge between vertexes $i$ and $j$, $i < j$, as a pair $(i,j)$.
%\begin{figure}[t]
%\begin{center}
%\begin{tabular}{cc}
%\includegraphics[width=0.4\textwidth]{fig/projected_graph.eps}
%\end{tabular}
%\end{center}
%\caption{Example of a projected graph. Disconnected pairs of vertexes and their edge are projected into a projected graph. 
%The disconnected pairs of which distances are within $\delta$ are projected.}
%\label{fig:lg}
%\end{figure}
The order of edges is defined as follows:
\begin{Definition}[Total order among edges]\label{def:order}
{\rm
$\forall e_1=(i, j), e_2=(k, l) \in E_{g}$, $e_1 <_e e_2$ 
if and only if i) $i < k$ or ii) $i = k, j < l$. }
\end{Definition}
Namely, one first compares the indices of the left nodes. If they are
identical, the right nodes are compared. The subgraph relationship
between two linear graphs is defined as follows. 
\begin{Definition}[Subgraph]
{\rm
Given two linear graphs $g_1=(V_{1}, E_{1}, L^{V_{1}}, L^{E_{1}})$, 
$g_2=(V_{2}, E_{2}, L^{V_{2}}, L^{E_{2}})$,  
$g_1$ is a subgraph of $g_2$, $g_1 \subseteq g_2$,  
if and only if there exists 
an injective mapping $m: V_{1} \rightarrow V_{2}$  such that 
\begin{enumerate}
  \item $\forall i \in V_{1}: L^{V_1}(i) = L^{V_2}(m(i)),$ vertex labels are identical, 
  \item $\forall (i,j) \in E_{1}: (m(i), m(j)) \in E_{2}, L^{E_{1}}(i,j) = L^{E_{2}}(m(i),m(j))$, all edges of $g_1$ exist in $g_2$, and
  \item $\forall (i,j) \in E_{1}: i < j \rightarrow m(i) < m(j)$, the order of vertices is conserved.
\end{enumerate}
}
\end{Definition}
The difference from the ordinary subgraph relation is that the vertex
order is conserved. Finally, frequent subgraph mining is defined as follows.
\begin{Definition}[Frequent linear subgraph mining] \label{Frequentmining}
{\rm
%\paragraph{Problem1 Frequent linear subgraph mining}
For a set of linear graphs $G=\{g_1, \cdots, g_{|G|}\}, 
g_i \in {\mathcal G}$, 
a minimum support threshold $\sigma > 0$ and 
a maximum pattern size $s > 0$, 
find all $g \in {\mathcal G}$ such that $g$ is frequent enough in $G$, i.e.,
\begin{eqnarray}
  |\{i=1,...,|G|:g \subseteq g_i\}| \ge \sigma, |g| \leq s \nonumber
\end{eqnarray}
}
\end{Definition}

\begin{figure*}[t]
\begin{center}
\begin{tabular}{cc}
\includegraphics[width=0.4\textwidth]{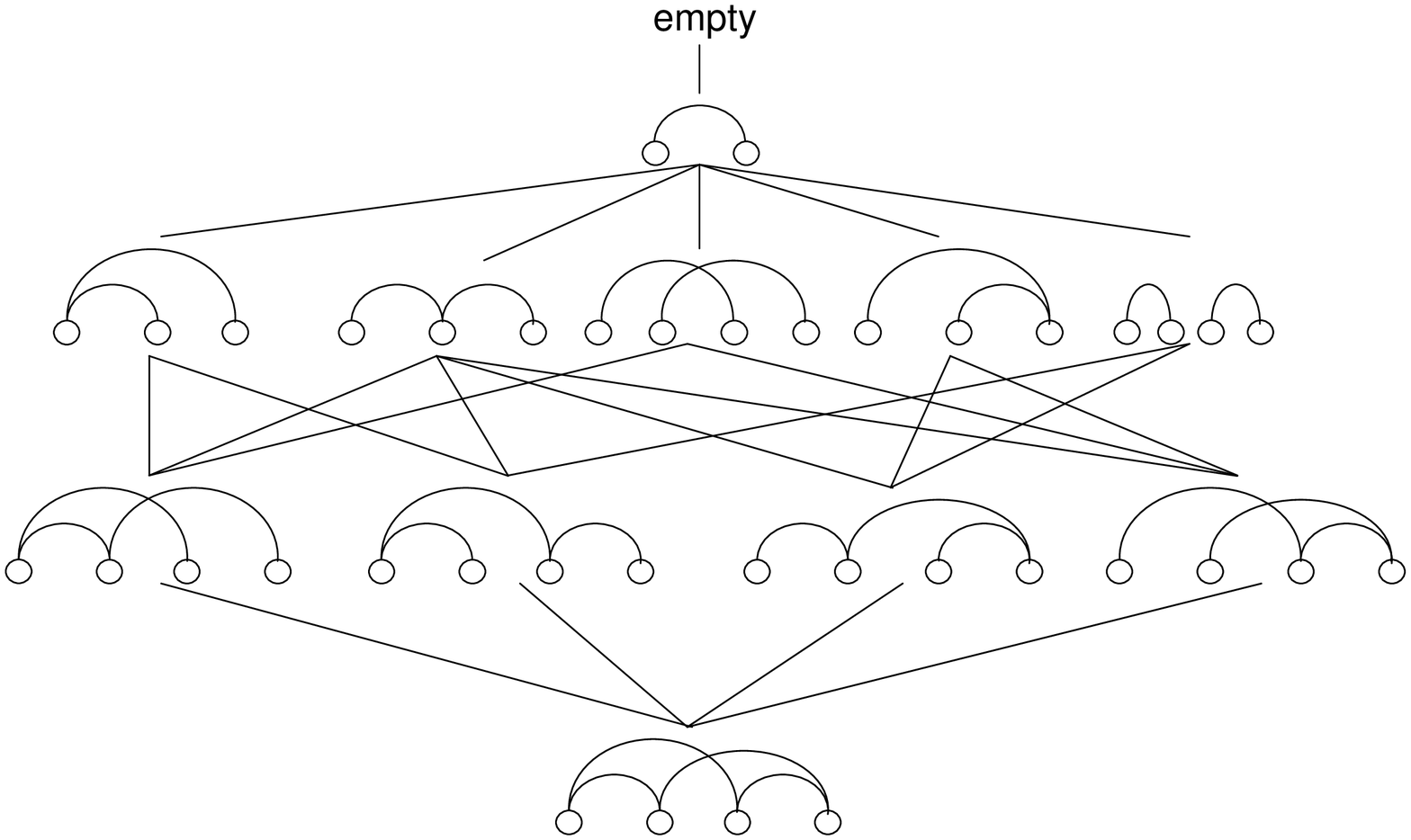} &
\includegraphics[width=0.4\textwidth]{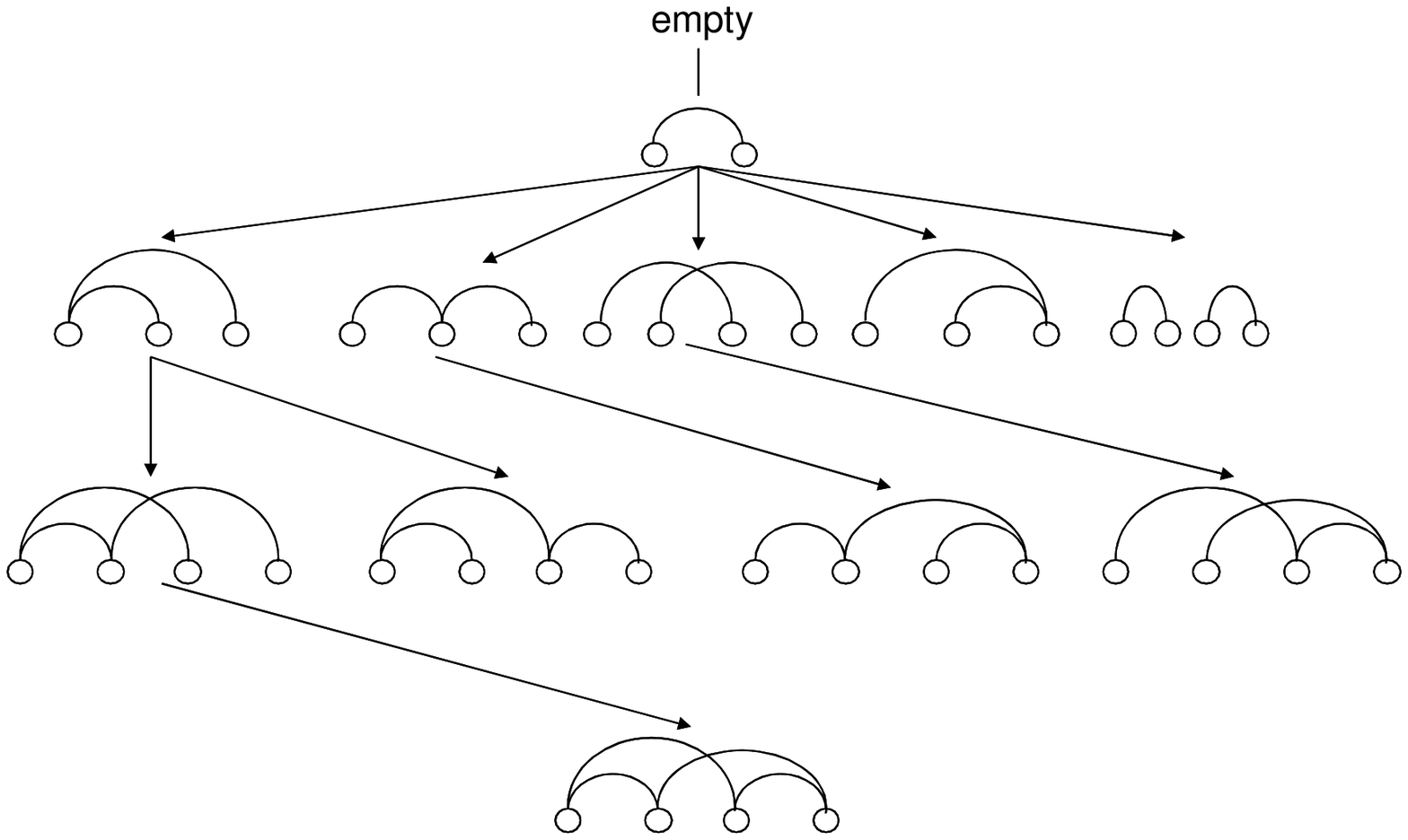}
%\vspace{-0.5cm}
\end{tabular}
\end{center}
\caption{(Left) Graph-shaped search space. 
(Right) Search tree induced by the reduction map}
\label{fig:graph_shaped_search_space}
\end{figure*}

\begin{figure*}[t]
\centering
\includegraphics[width=0.95\textwidth]{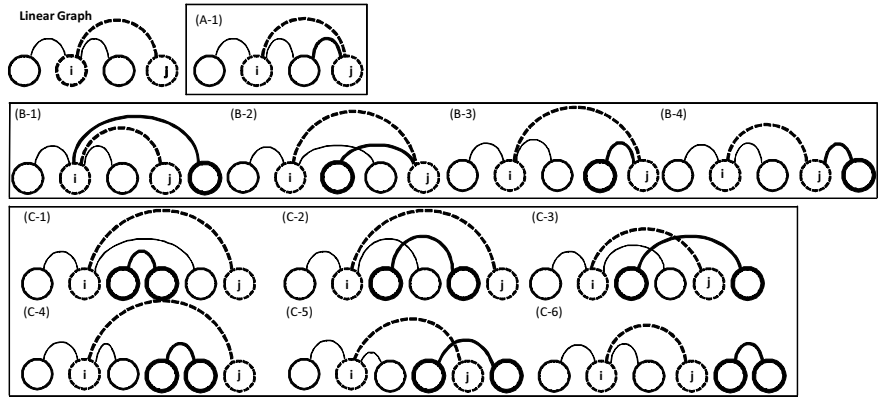}
%\vspace{-0.5cm}
\caption{Example of children patterns. There are three types of extension with respect to the number of nodes: 
(A) no-node-addition, (B) one-node-addition, (C) two-nodes-addition.}
\label{fig:childgen}
\end{figure*}

\section{Enumeration of Linear Subgraphs}
Before addressing the frequent pattern mining problem,
let us design an algorithm for enumerating all subgraphs of a linear
graph. For simplicity, we do not consider vertex and edge labels in this
section, but inclusion of the labels is straightforward.

\subsection{Reduction Map}
Suppose we would like to enumerate all subgraphs in 
a linear graph shown in the bottom of 
Figure~\ref{fig:graph_shaped_search_space}, left.
All linear subgraphs form a natural graph-shaped search space, 
where one can traverse upwards or downwards by deleting or adding an edge 
(Figure \ref{fig:graph_shaped_search_space}, left). 
For enumeration, however, 
one has to choose edges in the search graph to form a search tree (Figure 2, right).
Once a search tree is defined, the enumeration can be done 
either by depth-first or breadth-first traversal.
To this aim, we specify a {\em reduction map} 
$f: {\mathcal G} \rightarrow {\mathcal G}$ 
which transforms a child to its parent uniquely. 
The mapping is chosen such that when it is applied 
repeatedly, we eventually reduce it to an element of the solution set
${\mathcal S} \subset {\mathcal G}$. Formally, we write 
$\forall x \in {\mathcal G}: \exists k \leq 0: f^k(x) \in {\mathcal S}$.
In our case, the reduction map is defined as 
removing the ``largest'' edge from the child graph. 
The largest edge is defined via the total order 
introduced in Definition~\ref{def:order}. 
By evaluating the mapping repeatedly the graph is shrunk 
to the empty graph. Thus, here we have ${\mathcal S=\{\theta\}}$.

By applying $f(g)$ for all possible $g \in {\mathcal G}$, 
we can induce a search tree with 
$\theta \in {\mathcal G}$ being the root node, 
shown in Figure~\ref{fig:graph_shaped_search_space}, right.
A question is if we can always define a unique search tree for 
any linear graph. 
The reverse search theorem~\cite{Avis96} says that 
the proposition is true iff any node in the graph-shaped search space
converges to the root node (i.e., empty graph) by applying the map 
a finite number of times. For our reduction map, 
it is true, because each possible 
linear graph $g \in {\mathcal G}$ 
is reduced to the empty graph by successively applying 
$f$ to $g$.

A characteristic point of reverse search is that the search tree 
is implicitly defined by the reduction map.
In actual traversal, the search tree is created on demand: 
when a traverser is at a node with graph $g$ and would like to move down, a set of children nodes are generated by extending $g$. 
More precisely, one enumerate all linear graphs 
by inverting the reduction mapping such that the tree is explored 
from the root node towards the leaves. 

The inverse mapping $f^{-1}:{\mathcal G} \rightarrow {\mathcal G^*}$ 
generates for a given linear graph $g \in {\mathcal G}$ a 
set of extended graphs $X=\{g^\prime \; | \; f(g^\prime) = g\}$.

There are three types of extension patterns according to the number of added nodes in the reduction mapping:
(A) no-node-addition, (B) one-node-addition, (C) two-nodes-addition.
Let us define the largest edge of $g$ as $({\bold i}, {\bold j}), {\bold i} < {\bold j}$. 
Then, the enumeration of case A is done by adding an edge which is larger than $({\bold i}, {\bold j})$.
For case B, a node is inserted to the position after ${\bold i}$, 
and this node is connected to every other node.
If the new edge is smaller than $({\bold i}, {\bold j})$, this extension is canceled.
For case C, two nodes are inserted to the position after ${\bold i}$. 
In that case, the added two nodes must be connected by a new edge.
All patterns of valid extensions are shown in Figure~\ref{fig:childgen}.
This example does not include node labels, 
but for actual applications, node labels need to be enumerated as well. 
\section{Frequent Pattern Mining} 
In frequent pattern mining, we employ the same search tree described above, 
but the occurrence of a pattern in all linear graphs are tracked 
in an {\em occurrence list} $L_{G}(g)$~\cite{Yan02}, defined as follows:
\begin{eqnarray}
  L_G(g) &=& \{(i,m):g_i \in G, g \subseteq g_i~\mbox{with node correspondence}~m \}. \nonumber
\end{eqnarray}
When a pattern $g$ is extended, its occurrence list $L_G(g)$ is updated as well. 
Based on the occurrence list, the support of each pattern $g$, i.e., 
the number of linear graphs which contains the pattern, is calculated.
Whenever the support is smaller than the threshold $s$, 
the search tree is pruned at this node.
This pruning is possible, because of the anti-monotonicity of the support, 
namely the support of a graph is never larger than that of its subgraph. 
Algorithm \ref{LGM} describes the recursive algorithm for frequent mining. 
In line 13, each pattern $g$ is extended to larger graphs 
$g^\prime \in f^{-1}(g)$ 
by inverse reduction mapping $f^{-1}$. 
The possible extensions $f^{-1}(g)$ for each pattern $g$ are found 
using the location list $L_G(g)$.
The function {\sc Mine} is recursively called for each extended pattern $g' \in f^{-1}(g)$ in line 15. The graph pruning happens in lines 7, 
if the support for the pattern $g$ is smaller than 
the minimum support threshold $\sigma$ or 
in line 11 if the pattern size $|g|$ is equal to the maximum pattern size $s$.

\begin{algorithm}[t]
  \caption{Linear Graph Miner (LGM)}
  \label{LGM}
  \begin{algorithmic}[0]
    \STATE {\bf Input}:
    \STATE ~~ A set of linear graphs: $G = \{g_1,...,g_{|G|}\}$
    \STATE ~~ Minimum support: $\sigma \geq 0$
    \STATE ~~ Maximum pattern size: $s \geq 0$
  \end{algorithmic}
  \begin{algorithmic}[1]
    \STATE {\bf function} {\sc LGM}($G$, $\sigma$, $s$)~~~~~~~$\triangleright$ {\small the main function}
    \STATE ~~ {\sc Mine}$(G, \phi, \sigma, s)$
    \STATE {\bf end~function}
    \STATE {\bf function} {\sc Mine}$(G, g, \sigma, s)$
    \STATE ~~ $sup \leftarrow {\rm support}(L_G(g))$
    \STATE ~~ {\bf if} $sup < \sigma$ {\bf then}~~~~~~~~~$\triangleright${\small check support condition}
    \STATE ~~~~ {\bf return}
    \STATE ~~ {\bf end if}
    \STATE ~~{\bf Report} occurrence of subgraph $g$
    \STATE ~~ {\bf if} $|g| = s$ {\bf then}~~~~~~~~~~~~~~~~~~~~$\triangleright${\small check pattern size}
    \STATE ~~~~ {\bf return}
    \STATE ~~ {\bf end if}
    \STATE ~~ scan $G$ once by using $L_G(g)$, find all extensions $f^{-1}(g)$
    \STATE ~~ {\bf for} $g^\prime \in f^{-1}(g)$
    \STATE ~~~~ {\sc Mine}$(G,g^\prime,\sigma,s)$\\~~~~~~~~~~~~~~~~~~~~~$\triangleright${\small call {\sc Mine} for every extended pattern $g'$}
    \STATE ~~ {\bf end for}
    \STATE {\bf end~function}
  \end{algorithmic}

\end{algorithm}

\section{Complexity Analysis} 
The computational time of frequent pattern mining depends 
on the minimum support and maximum pattern size thresholds~\cite{Yan02}. 
Also, it depends on the ``density'' of the database: 
If all graphs are almost identical (i.e., a dense database), 
the mining would take a prohibitive amount of time. 
So, conventional worst case analysis is not amenable to mining algorithms. 
Instead, the {\em delay}, interval time between two consecutive solutions, 
is often used to describe the complexity. 
General graph mining algorithms including gSpan are exponential delay 
algorithms, i.e., the delay is exponential to the size of patterns~\cite{Yan02}. The delay of our algorithm is only polynomial, 
because no duplication checks are necessary thanks to the vertex order. 
\begin{theorem}[Polynomial delay]
For $N$ linear graphs $G$, a minimum support $\sigma > 0$, and a maximum pattern size $s > 0$, the time between two successive calls to 
${\bf Report}$ in line 9 is bounded by a polynomial of the size of input data.
\end{theorem}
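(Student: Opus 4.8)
The plan is to show that Algorithm~\ref{LGM} performs a depth-first, pre-order traversal of the search tree induced by the reduction map $f$, reporting at every frequent node in line 9, and that the work done between two consecutive such reports is polynomially bounded. I would organize the argument around three quantities: the depth $d$ of the tree, the branching factor $B$ (number of children of any node), and the cost $C$ of processing a single node. I would bound each by a polynomial in the input size and then combine them into a delay bound.

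First I would bound the depth. Since every inverse step $g' \in f^{-1}(g)$ creates exactly one new edge (in all three cases A, B, C a single edge is added), we have $|g'| = |g| + 1$; because lines 10--11 terminate the recursion once $|g| = s$, no frequent node lies below depth $s$, hence $d \le s$. Next I would bound the per-node cost $C$. At node $g$ the algorithm computes the support from $L_G(g)$ (linear in $|L_G(g)|$) and scans $G$ once through $L_G(g)$ to produce all extensions $f^{-1}(g)$ together with their occurrence lists. For each occurrence $(i,m) \in L_G(g)$ the number of candidate edges that may be appended is at most the number of vertex pairs of $g_i$, i.e. $O(|V_i|^2)$; summing over $L_G(g)$ bounds both $C$ and the branching factor $B$ by a polynomial in the total input size and in $\max_g |L_G(g)|$.

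The heart of the argument is the delay estimate. After a report at node $g$, the next report is reached either by descending or by backtracking. Descending costs one node expansion ($C$) plus skipping the infrequent children met before the first frequent one; each such child triggers only the support test of lines 5--7 and returns immediately, costing $O(C)$, and there are at most $B$ of them. If the current subtree yields no further frequent node, the recursion unwinds; by the depth bound it pops at most $s$ frames, and at each frame it advances to the next sibling, again skipping at most $B$ infrequent children at cost $O(C)$ each. Hence the delay is $O(s \cdot B \cdot C)$, a polynomial in the input size.

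The main obstacle is the treatment of the pruned branches, and this is where I expect the argument to need the most care. The crucial point is that an infrequent extension must not set off further recursion: lines 6--7 guarantee an immediate return, so every pruned child contributes only an additive $O(C)$ rather than an exponentially branching subtree. This is exactly where the vertex order pays off, since no duplication test---the operation responsible for gSpan's exponential delay---is ever invoked. A secondary point I would make explicit is the bound on $|L_G(g)|$, and hence on $B$ and $C$: because a pattern with $|g| \le s$ edges has at most $2s$ vertices, each order-preserving embedding into $g_i$ is an injection of at most $2s$ vertices into $|V_i|$ vertices, so $|L_G(g)| = O\!\left(N \cdot \max_i |V_i|^{2s}\right)$, which is polynomial once $s$ is regarded as fixed. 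Flagging this dependence on $s$ is, I expect, the most delicate part of stating the result cleanly.
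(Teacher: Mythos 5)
Your overall decomposition---per-node cost times tree depth, with each pruned child contributing only an additive support test---matches the paper's proof, which bounds the delay by $O(M^2NF)$ where $M=\max_i|V_{g_i}|$ and $F=\max_i|E_{g_i}|$. The divergence, and the genuine gap, lies in how you bound the occurrence list. You take $L_G(g)$ to contain every order-preserving embedding of $g$, giving $|L_G(g)|=O\bigl(N\cdot\max_i|V_i|^{2s}\bigr)$, and you concede that this is polynomial only ``once $s$ is regarded as fixed.'' But $s$ is a parameter of the theorem, not a constant: the claim is a bound polynomial in the size of the input data, and $M^{2s}$ is exponential in that size. As written, your argument establishes only a fixed-parameter version of the statement, not the statement itself.

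The missing idea is the paper's key (if tersely justified) claim that $|L_G(g)|\le M^2N$ independently of $s$: because the reduction map always removes the largest edge, every extension appends an edge larger than the current largest edge $(\mathbf{i},\mathbf{j})$, so the occurrence list need only record, per database graph, where that largest edge is embedded---at most $M^2$ locations per graph, rather than one entry per full correspondence $m$. With that bound the per-node cost becomes $O(M^2N)$ and the delay $O(M^2NF)$ is genuinely polynomial. (One may reasonably ask whether collapsing occurrences to largest-edge positions retains enough information for case-A extensions, which reconnect previously placed vertices; the paper does not elaborate. But without some such compression of $L_G(g)$ your route cannot reach the stated conclusion.) Your depth bound $d\le s$ is correct and complements the paper's $O(F)$ bound, and your account of descending, skipping infrequent siblings, and backtracking is consistent with the paper's two-case analysis; only the occurrence-list bound needs repair.
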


\begin{proof} Let $M:=\max_i|V_{g_i}|$, $F:=\max_i|E_{g_i}|$. 
The number of matching locations in the linear graphs $G$ can decrease in case $g$ is enlarged, 
because the only largest edge is added. Considering the number of 
variations, it is easy to see that the location list always satisfies
$|L_G(g)| \leq M^2N$. Therefore, the mapping $f^{-1}(g)$ can be produced in $O(M^2N)$ time, 
because the procedure searches for the location list in line 13. 

The time complexity between two successive calls to ${\sc Report}$ can now 
be bounded by considering two cases after ${\sc Report}$ has been called once.
\begin{itemize}
  \item Case 1. There is an extension $g'$ fulfilling the minimum support condition,or the size of $g'$ is $s$. Then ${\sc Report}$ is called within $O(M^2N)$ time.
  \item Case 2. There is no extension $g'$ fulfilling the minimum support condition.Then, no recursion happens and ${\sc Mine}$ returns in $O(M^2N)$ time 
to its parent node in the search tree. The maximum number of times this can 
happen successively is bounded by the depth of the reverse search tree, 
which is bounded by $O(F)$, because each level in the search tree adds one edge. 
Therefore, in $O(M^2NF)$ time the algorithm either calls ${\sc Report}$ again or finishes. 
\end{itemize}
Thus, the total time between two successive calls to ${\sc Report}$ is bounded by $O(M^2NF)$.
\end{proof}

\begin{figure}[t]
\begin{center}
  \includegraphics[width=0.6\textwidth]{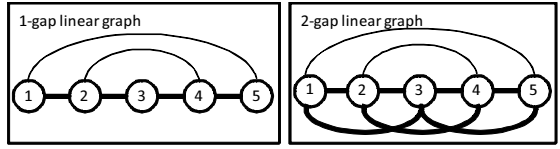} 
\caption{Example of gap linear graph. 1-gap linear graph~(left) and 2-gap linear graph~(right) are represented, respectively. Edges corresponding to gaps are represented in bold line.}
\label{fig:gappedgraph}
\end{center}
\end{figure}

\section{Experiments}
We performed a motif extraction experiment from protein 3D
structures. Frequent and characteristic patterns are often called
``motifs'' in molecular biology, and we adopt that terminology here. 
All experiments were performed on a Linux machine with an AMD Opteron processor (2 GHz and 4GB RAM).

\subsection{Motif extraction from protein 3D structures}
We adopted the Glyankina et al's dataset~\cite{Glyakina07} 
which consists of pairs of homologous proteins: one is 
derived from a thermophilic organism and the other is from 
a mesophilic organism. 
This dataset was made for understanding structural properties of proteins which are responsible 
for the higher thermostability of proteins from thermophilic organisms compared to those from mesophilic organisms.
In constructing a linear graph from a 3D structure,
each amino acid is represented as a vertex.
Vertex labels are chosen from $\{1,\ldots,6\}$, 
which represents the following six classes:  aliphatic \{AVLIMC\}, 
aromatic \{FWYH\}, polar \{STNQ\}, positive \{KR\}, negative \{DE\}, 
special (reflecting their special conformation properties)
\{GP\}~\cite{Mirny99}.
An edge is drawn between the pair of amino acid residues whose distance is within 5 angstrom. 
No edge labels are assigned. 
In total, $754$ graphs were made.
Average number of vertices and edges are $371$ and $498$, respectively, 
and the number of labels is $6$.
To detect the motifs characterizing the difference between two
organisms, we take the following two-step approach.
First, we employ LGM to find frequent patterns from all proteins of both 
organisms. 
In this setting, we did not use (c-6) patterns in Figure 3. 
Finally, the patterns significantly associated with organism difference are
selected via statistical tests.

\begin{figure}[t]
   \begin{minipage}{.4\textwidth}
     \includegraphics[width=5cm]{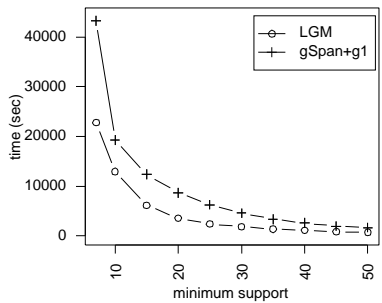}
   \end{minipage}
   \begin{minipage}{.1\textwidth}
    \hfill
   \end{minipage}
   \begin{minipage}{.4\textwidth}
\vspace*{-1cm}
      \caption{Execution time for the protein data. 
      The line labeled by gSpan+g1 is execution time for gSpan on the 1-gap linear graph dataset.
      gSpan does not work on the 2-gap linear graph dataset even if the minimum support threshold is 50.}
\label{fig:protein_time}
   \end{minipage}

\end{figure}

%\begin{figure}[t]
%\begin{center}
%  \includegraphics[width=0.4\textwidth]{fig/lgm_vs_gspan_time.eps} 
%\caption{Execution time for the protein data. The line labeled by gSpan+g1 is execution time for gSpan on the 1-gap linear graph dataset. gSpan does not work on the 2-gap linear graph dataset even if the minimum support threshold is 50.}
%\label{fig:protein_time}
%\end{center}
%\end{figure}

\begin{figure}[t]
\begin{center}
\begin{tabular}{cc}
\includegraphics[width=0.7\textwidth]{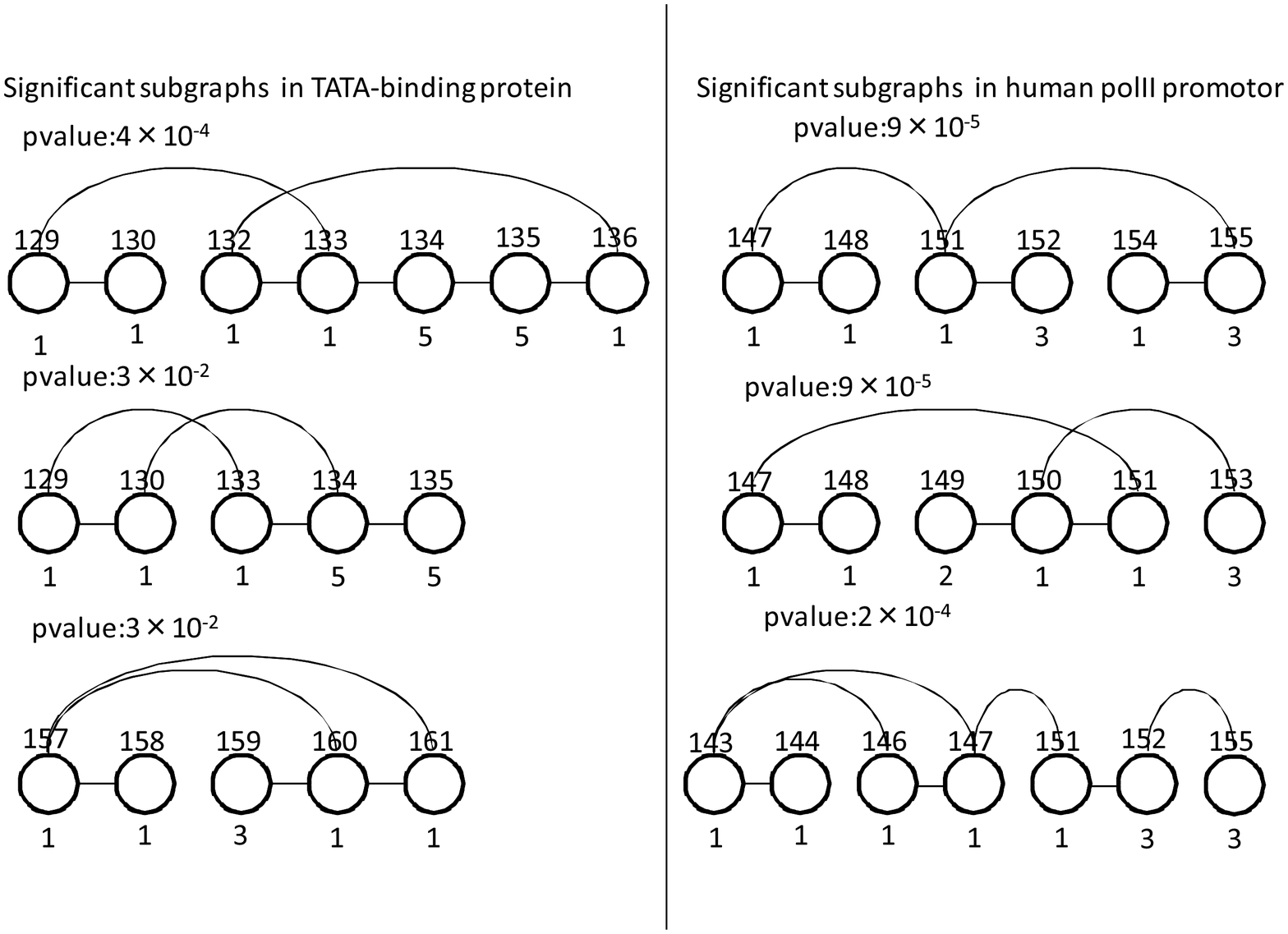} &
\end{tabular}
\vspace{-0.5cm}
\end{center}
\caption{Significant subgraphs detected by LGM. The p-value calculated by fisher exact test is attached to each linear graph. The node labels 1, 2, 3, 4 and 5, represent aliphatic, aromatic, polar, positive and negative proteins, respectively.}
\label{fig:sigpro}
\end{figure}

\begin{figure}
\begin{center}
\begin{tabular}{cc}
\includegraphics[width=0.4\textwidth]{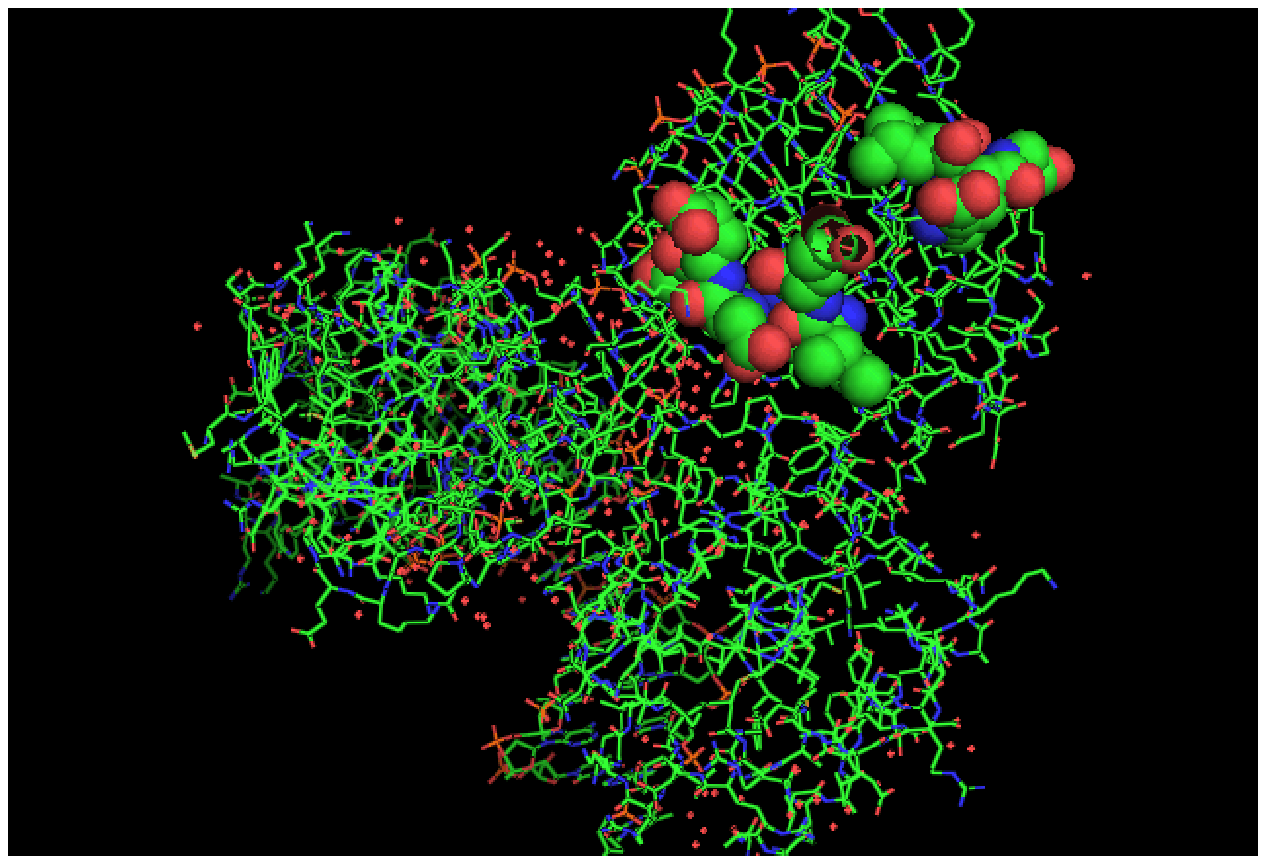} &
\includegraphics[width=0.4\textwidth]{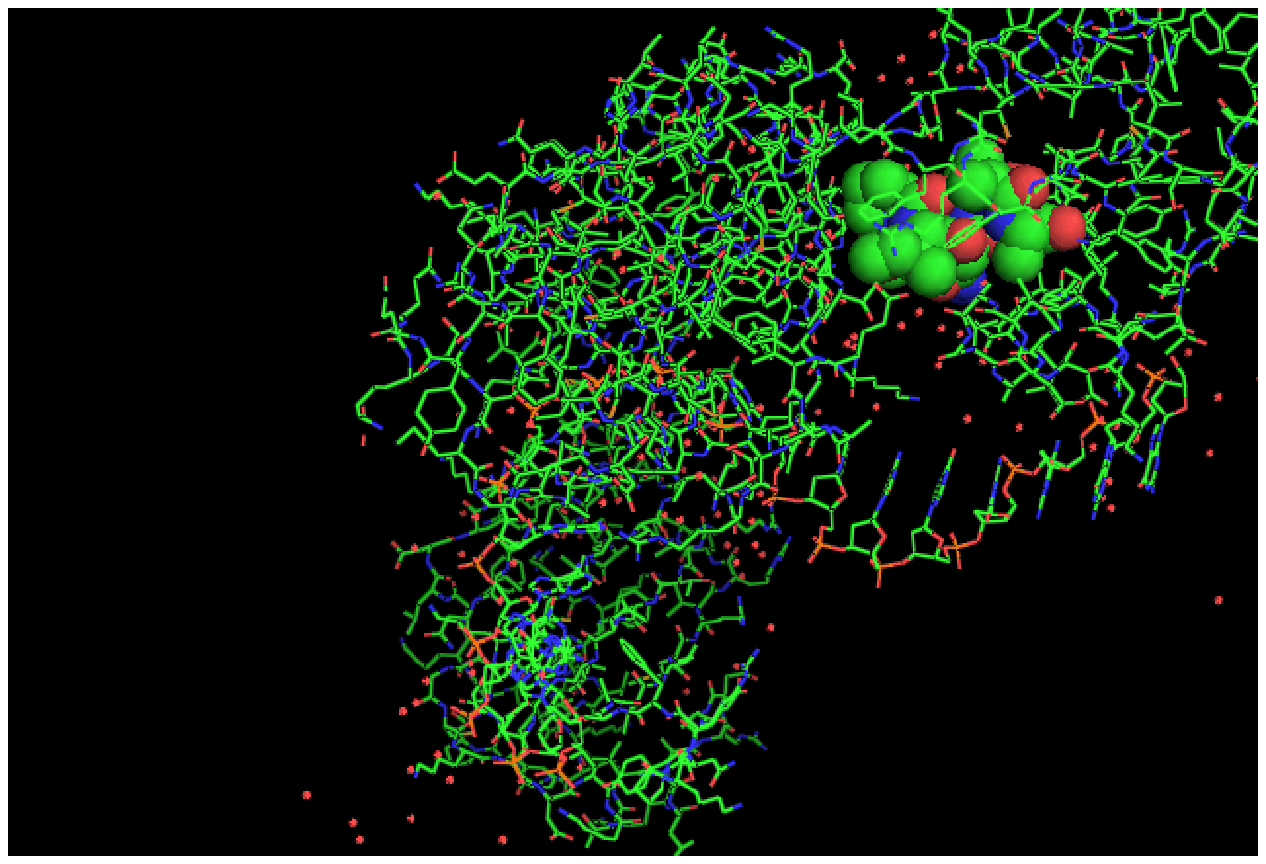} 
\end{tabular}
\vspace{-0.5cm}
\end{center}
\caption{3D-structures of TATA-binding protein(left) and human pol II promotor protein (right). The spheres represent the amino acid residues corresponding to vertices forming subgraphs in figure \ref{fig:sigpro}. }
\label{fig:protein}
\end{figure}

We assess the execution time of our algorithm in comparison with gSpan. 
The linear graphs from 3D-structure proteins are 
not always connected graphs and 
the gSpan can not be applied to such disconnected graphs. 
Hence, we made two kinds of gaped linear graph: 1-gap linear graph and 2-gap linear graph. 
1-gap linear graph is a linear graph whose contiguous vertices in a protein sequence are connected by an edge;
2-gap linear graph is a 1-gap linear graph whose two vertices skipping one in a protein sequence are 
connected by an edge (Figure~\ref{fig:gappedgraph}).
We run gSpan on two datasets: one consists of 1-gap linear graphs 
and the other consists of 2-gap linear graphs.
We run LGM on the original linear graphs.
We set the maximum execution time to 12 hours for both programs.
Figure~\ref{fig:protein_time} shows the execution time 
by changing minimum support thresholds. 
gSpan does not work on the 2-gap linear graph dataset even if the minimum support threshold is 50.
Our algorithm is faster than gSpan on the 1-gap linear graph dataset, and 
its execution time is reasonable. 

%\begin{table}[t]
%\begin{center}
%\begin{tabular}{cccccc} 
%\hline
% & PAS+LGM & PAS+gSpan & dep+FREQT & n-gram+FREQT & bow\\
%\hline
%sentiment & 100745 (2,10) & 59763(2,10) & 108704 (2,inf) & 53112 (2,inf) & 21425 \\
%subjectivity & 2898222 (2,5) & 1108032(2,10) & 122974 (2,10) & 60247 (2,inf) & 23926 \\
%\hline
%\end{tabular}
%\end{center}
%\caption{The number of enumerated patterns for sentiment and subjectivity data. 
%The parenthetic numbers represent a minimum support threshold (left) and a maximum% pattern size (right) for 
%a mining algorithm.}
%\label{table:num_enum_pat}
%\end{table}

Then, we assess a motif extraction ability of our algorithm. 
To choose significant subgraphs from the enumerated subgraphs, 
we use Fisher's exact test. 
In this case, a significant subgraph should distinguish thermophilic proteins
from mesophilic proteins. 
Thus, for each frequent subgraph, we count the number of proteins 
containing this subgraph in the thermophilic and mesophilic proteins; and 
generate a $2 \times 2$ contingency table, which includes the number of thermophilic organisms that contain subgraph $g$ $n_{TP}$, the number of thermophilic organisms that does not contain a subgraph $g$ $n_{FP}$, the number of mesophilic organisms that does not contain a subraph $g$ $n_{FN}$ and the number of mesophilic organisms that contain a subgraph $g$ $n_{TN}$. 
The probability representing the independence in the contingency table is calculated as follows:
\[
  Pr = \frac{\begin{pmatrix} n_g \\ n_{TP} \end{pmatrix}\begin{pmatrix}n_{g'} \\ n_{FN} \end{pmatrix}}{\begin{pmatrix} n \\ n_p \end{pmatrix}} 
    = \frac{n_g! n_{g'}!n_P!n_N!}{n!n_{TP}!n_{FP}!n_{FN}!n_{TN}!}, 
\]
where $n_{P}$ is the number of thermophilic proteins;
$n_{N}$ the number of mesophilic proteins; 
$n_{g} $ the number of proteins with a subgraph $g$; 
$n_{g'}$ the number of proteins without a subgraph $g'$. 
The p-value of the two-sided Fisher's exact test on a table can be computed by the 
sum of all probabilities of tables that are more extreme than this table. 

We ranked the frequent subgraphs according to the p-values, 
and obtained 103 subgraphs whose p-values are no more than 0.001.
% shold write the number of significant classes
Here, we focused on a pair of proteins, 
TATA-binding protein and human polII promotor protein, 
where TATA-binding protein is derived from a thermophilic organism 
and human polII promotor is from a mesophilic organism. 
The reason we chose these two proteins is that they 
include a large number of statistically significant motifs 
which are mutually exclusive between two organisms. 
These two proteins share the same function as DNA-binding protein, 
but their thermostabilities are different. 
Figure \ref{fig:sigpro} shows the top-3 subgraphs in significance. 
%We could obtain characteristic subgraphs with respect 
%to thermostability by using our algorithm. 
Figure \ref{fig:protein} shows 3D-structure proteins, 
TATA-binding protein (left) and human polII promotor protein(right), 
and the amino acid residues forming top3-subgraphs are represented by spheres.

\section{Conclusion}
We proposed an efficient frequent subgraph mining algorithm from linear graphs.
A key point is that vertices in a linear graph are totally ordered. 
We designed a fast enumeration algorithm from 
linear graphs based on this property. 
For an efficient enumeration without duplication, 
we define a search tree based on reverse search techniques.
Different from gSpan, our algorithm enumerates frequent subgraphs including 
disconnected ones by traversing this search tree. 
Many kinds of data, 
such as protein 3D-structures and alternative splicing forms, 
which can be represented as linear graphs, 
include disconnected subgraphs as important patterns. 
The computational time of our algorithm is polynomial-delay. 

We performed a motif extraction experiment of a protein 3D-structure dataset 
in molecular biology.
% and a sentiment/subjectivity dataset in natural language processing. 
In the experiment, our algorithm could extract important subgraphs as 
frequent patterns. 
By comparing our algorithm to gSpan with respect to execution time, 
we have shown our algorithm is fast enough for the real world datasets. 

Data which can be represented as linear graphs occur in many fields, for instance bioinformatics and 
natural language processing. Our mining algorithm from linear graphs provide a new way to analyze such data.

\section*{Acknowledgements}
This work is partly supported by research fellowship from JSPS for young scientists,
MEXT Kakenhi 21680025 and the FIRST program.
We would like to thank 
M.~Gromiha for providing the protein 3D-structure dataset,
T.~Uno and H.~Kashima for fruitful discussions.

\bibliographystyle{plain}
\bibliography{biblio}

\end{document}